\pgfplotsset{compat=1.14}
\newcommand{\ldot}{\mathpunct{.}}
\newcommand{\dep}{\mathit{dep}}
\newcommand{\var}{\mathit{var}}
\newcommand{\unsat}{\text{UNSAT}}
\newcounter{invariant}
\newenvironment{invariant}[1][]{\refstepcounter{invariant}\par\medskip
   \noindent \textbf{Invariant~\theinvariant. #1} \rmfamily}{\medskip}
\title{Understanding and Extending Incremental Determinization for 2QBF}
\institute{}
\author{Markus N. Rabe${}^1$ \and Leander Tentrup${}^2$ \and Cameron Rasmussen${}^1$ \and Sanjit~A.~Seshia${}^1$}
\institute{
${}^1$University of California, Berkeley\\
${}^2$Saarland University
}
\begin{document}

\maketitle

\begin{abstract}
	Incremental determinization is a recently proposed algorithm for solving quantified Boolean formulas with one quantifier alternation. 
	In this paper, we formalize incremental determinization as a set of inference rules to help understand the design space of similar algorithms. 
	We then present additional inference rules that extend incremental determinization in two ways. 
	The first extension integrates the popular CEGAR principle and the second extension allows us to analyze different cases in isolation. 
	The experimental evaluation demonstrates that the extensions significantly improve the performance. 
\end{abstract}

\section{Introduction}

Solving quantified Boolean formulas (QBFs) is one of the core challenges in automated reasoning and is particularly important for applications in verification and synthesis. 
For example, program synthesis with syntax guidance~\cite{Solar-LezamaRBE/2005/Sketching,AlurBodikJuniwalMartinRaghothamanSeshiaSinghSolarLezamaTorlakUdupa/2015/SyntaxGuidedSynthesis} and the synthesis of reactive controllers from LTL specifications has been encoded in QBF~\cite{FaymonvilleFinkbeinerRabeTentrup/2017/EncodingsOfBoundedSynthesis,Bloemetal/2014/SATbasedSynthesisForSafetySpecs}. 
Many of these problems require only formulas with one quantifier alternation (2QBF), which are the focus of this paper. 

Algorithms for QBF and program synthesis largely rely on the counterex\-am\-ple-guided inductive synthesis principle 
(CEGIS)~\cite{solar-asplos06}, originating in abstraction refinement (CEGAR)~\cite{ClarkeGrumbergJhaLuVeith/2000/CEGAR,JhaSeshia/2017/FormalSynthesisViaInductiveLearning}. 
For example, for program synthesis, CEGIS-style algorithms alternate between generating candidate programs and checking them for counter-examples, which allows us to lift arbitrary verification approaches to synthesis algorithms. 
Unfortunately, this approach often degenerates into a plain guess-and-check loop when counter-examples cannot be generalized effectively. 
This carries over to the simpler setting of 2QBF. 
For example, even for a simple formula such as $\forall x.\exists y.~x=y$, where $x$ and $y$ are 32-bit numbers, most QBF algorithms simply enumerate all $2^{32}$ pairs of assignments. 
In fact, even the modern QBF solvers diverge on this formula when preprocessing is deactivated. 

Recently, Incremental Determinization (ID) has been suggested to overcome this problem~\cite{RabeSeshia/2016/IncrementalDeterminization}. 
ID represents a departure from the CEGIS approach in that it is structured around identifying which variables have unique Skolem functions.
(To prove the truth of a 2QBF $\forall x.\exists y.~\varphi$ we have to find Skolem functions $f$ mapping $x$ to $y$ such that $\varphi[f/y]$ is valid.) 
After assigning Skolem functions to a few of the existential variables, the propagation procedure determines Skolem functions for other variables that are uniquely implied by that assignment. 
When the assignment of Skolem functions turns out to be incorrect, ID analyzes the conflict, derives a conflict clause, and backtracks some of the assignements. 
In other words, ID lifts CDCL to the space of Skolem functions. 

ID can solve the simple example given above and shows good performance on various application benchmarks. 
Yet, the QBF competitions have shown that the relative performance of ID and CEGIS still varies a lot between benchmarks~\cite{conf/sat/Pulina16}. 
A third family of QBF solvers, based on the \emph{expansion} of universal variables~\cite{Biere/2004/ResolveAndExpand,PigorschScholl/2010/AnAIGbasedQBFsolver,CharwatWoltran/2016/DynQBF}, shows yet again different performance characteristics and outperforms both ID and CEGIS on some (few) benchmarks. 
This variety of performance characteristics of different approaches indicates that current QBF solvers could be significantly improved by integrating the different reasoning principles. 

In this paper, we first formalize and generalize ID~\cite{RabeSeshia/2016/IncrementalDeterminization} (Section~\ref{sec:inference-rules}). 
This helps us to disentangle the working principles of the algorithm from implementation-level design choices. 
Thereby our analysis of ID enables a systematic and principled search for better algorithms for quantified reasoning. 
To demonstrate the value and flexibility of the formalization, we present two extensions of ID that integrate CEGIS-style inductive reasoning (Section~\ref{sec:CEGAR}) and expansion (Section~\ref{sec:cases}). 
In the experimental evaluation we demonstrate that both extensions significantly improve the performance compared to plain ID (Section~\ref{sec:eval}).

\vspace{-2mm}
\paragraph{Related work.}
This work is written in the tradition of works such as the Model Evolution Calculus~\cite{BaumgartnerTinelli/2003/TheModelEvolutionCalculus}, AbstractDPLL~\cite{NieuwenhuisOliverasTinelli/AbsractDPLL}, MCSAT~\cite{DejanDeMoura/2013/MCSAT}, and recent calculi for QBF~\cite{conf/synasc/FazekasSB16}, which present search algorithms as inference rules to enable the study and extension of these algorithms. 
ID and the inference rules presented in this paper can be seen as an instantiation of the more general frameworks, such as MCSAT~\cite{DejanDeMoura/2013/MCSAT} or Abstract Conflict Driven Learning~\cite{DSilva/2013/AbstractCDL}. 

Like ID, quantified conflict-driven clause learning (QCDCL) lifts CDCL to QBF~\cite{GiunchigliaNT/2001/QuBE,LonsingBiere/2010/DepQBF}. 
The approaches differ in that QCDCL does not reason about functions, but only about values of variables. 
Fazekas et al.\ have formalized QCDCL as inference rules~\cite{conf/synasc/FazekasSB16}. 
 
2QBF solvers based on CEGAR/CEGIS search for universal assignments and matching existential assignments using two SAT solvers~\cite{Ranjan/2004/AComparativeStudyof2QBFAlgorithms,JanotaSilva/2011/AbstractionBasedAlgorithmFor2QBF,solar-asplos06}. 
There are several generalizations of this approach to QBF with more than one quantifier alternation~\cite{JanotaKMC/2012/QBFWithCEGAR,JanotaMarquesSilva/2015/SolvingQBFByClauseSelection,RabeTentrup/2015/CAQEACertifyingQBFSolver,BloemBH/2016/ijtihad,Tentrup/2017/OnExpansionAndResolution}. 


\section{Preliminaries}
\label{sec:prelim}

\newcommand{\False}{\mathbf{0}}
\newcommand{\True}{\mathbf{1}}

\vspace{-1mm}
Quantified Boolean formulas over a finite set of variables $x\in X$ with domain $\mathbb{B} = \{\False,\True\}$ are generated by the following grammar: 
\begin{equation*}
  \varphi \coloneqq \False \mid \True \mid x \mid \neg \varphi \mid (\varphi) \mid \varphi \lor \varphi \mid \varphi \land \varphi \mid \exists x \ldot \varphi \mid \forall x \ldot \varphi \enspace
\end{equation*}
We consider all other logical operations, including implication, XOR, and equality as syntactic sugar with the usual definitions. 
We abbreviate multiple quantifications $Q x_1.Q x_2.\dots Q x_n.\varphi$ using the same quantifier $Q\in\{\forall,\exists\}$ by the quantification over the set of variables $X=\{x_1,\dots,x_n\}$, denoted as $Q X.\varphi$. 

An \emph{assignment} $\vec x$ to a set of variables $X$ is a function $\vec x : X \rightarrow \mathbb B$ that maps each variable $x \in X$ to either $\True$ or $\False$.
Given a propositional formula $\varphi$ over variables $X$ and an assignment $\vec x'$ to $X'\subseteq X$, we define $\varphi(\vec x')$ to be the formula obtained by replacing the variables $X'$ by their truth value in $\vec x'$.
By $\varphi(\vec x',\vec x'')$ we denote the replacement by multiple assignments for disjoint sets $X',X''\subseteq X$. 

A quantifier $Q\, x \ldot \varphi$ for $Q \in \{\exists,\forall\}$ \emph{binds} the variable $x$ in its subformula $\varphi$ and we assume w.l.o.g. that every variable is bound at most once in any formula.
A \emph{closed} QBF is a formula in which all variables are bound.
We define the dependency set of an existentially quantified variable~$y$ in a formula $\varphi$ as the set $\dep(y)$ of universally quantified variables $x$ such that $\varphi$'s subformula $\exists y \ldot \psi$ is a subformula of $\varphi$'s subformula $\forall x.\psi'$. 
A \emph{Skolem function} $f_y$ maps assignments to $\dep(y)$ to a truth value. 
We define the truth of a QBF $\varphi$ as the existence of Skolem functions $f_Y=\{f_{y_1},\dots,f_{y_n}\}$ for the existentially quantified variables $Y=\{y_1,\dots,y_n\}$, such that $\varphi(\vec x, f_Y(\vec x))$ holds for every $\vec x$, where $f_Y(\vec x)$ is the assignment to $Y$ that the Skolem functions $f_Y$ provide for $\vec x$. 

%
A formula is in \emph{prenex normal form}, if the formula is closed and starts with a sequence of quantifiers followed by a propositional subformula.
A formula $\varphi$ is in the $k$QBF fragment for $k\in\mathbb{N}^+$ if it is closed, in prenex normal form, and has exactly $k-1$ alternations between $\exists$ and $\forall$ quantifiers. 

A \emph{literal} $l$ is either a variable $x \in X$, or its negation $\neg x$. 
Given a set of literals $\{l_1,\dots,l_n\}$, their disjunction $(l_1 \lor \ldots \lor l_n)$ is called a \emph{clause} and their conjunction  $(l_1 \land \ldots \land l_n)$ is called a \emph{cube}.
We use $\overline l$ to denote the literal that is the logical negation of $l$.
We denote the variable of a literal by $\var(l)$ and lift the notion to clauses $\var(l_1\vee\dots\vee l_n) = \{\var(l_1),\dots, \var(l_n)\}$. 

A propositional formula is in conjunctive normal form (CNF), if it is a conjunction of clauses. 
A prenex QBF is in prenex conjunctive normal form (PCNF) if its propositional subformula is in CNF.
Every QBF $\varphi$ can be transformed into an equivalent PCNF with size $O(|\varphi|)$~\cite{Tseitin/1968/OnTheComplexityOfDerivationInPropositionalCalculus}.

\vspace{-2mm}
\paragraph{Resolution} is a well-known proof rule that allows us to merge two clauses as follows.
Given two clauses $C_1\vee v$ and $C_2\vee \neg v$, we call $C_1\otimes_v C_2 = C_1\vee C_2$ their \emph{resolvent} with pivot $v$. 
The resolution rule states that $C_1\vee v$ and $C_2 \vee \neg v$ imply their resolvent. 
Resolution is refutationally complete for propositional Boolean formulas, i.e. for every propositional Boolean formula that is equivalent to false we can derive the empty clause. 

For \emph{quantified} Boolean formulas, however, we need additional proof rules. 
The two most prominent ways to make resolution complete for QBF are to add either \emph{universal reduction} or \emph{universal expansion}, leading to the proof systems Q-resolution~\cite{BuningKarpinskiFlogel/1995/ResolutionforQBF} and $\forall$Exp-Res~\cite{Biere/2004/ResolveAndExpand,JanotaMarquesSilva/2015/ExpansionversusResolution}, respectively. 

\vspace{-2mm}
\paragraph{Universal expansion} eliminates a single universal variable by creating two copies of the subformulas of its quantifier. 
Let $Q_1.\forall x.Q_2.~\varphi$ be a QBF in PCNF, where $Q_1$ and $Q_2$ each are a sequence of quantifiers, and let $Q_2$ quantify over variables $X$. 
Universal expansion yields the \emph{equivalent} formula $Q_1.Q_2.Q_2'.~\varphi[\True/x,X'/X]\wedge\varphi[\False/x]$, where $Q_2'$ is a copy of $Q_2$ but quantifying over a fresh set of variables $X'$ instead of $X$. 
The term $\varphi[\True/x,\ X'/X]$ denotes the $\varphi$ where $x$ is replaced by~$\True$ and the variables $X$ are replaced by their counterparts in $X'$. 

\vspace{-2mm}
\paragraph{Universal reduction} allows us to drop universal variables from clauses when none of the existential variables in that clause may depend on them. 
Let $C$ a clause of a QBF and let $l$ be a literal of a universally quantified variable in $C$. 
Let us further assume that $\overline l$ does not occur in $C$. 
If all existential variables $v$ in $C$ we have $\var(l)\notin\dep(v)$, universal reduction allows us to remove $l$ from $C$. 
The resulting formula is equivalent to the original formula. 


\newcommand{\add}{\mathit{add}}

\vspace{-2mm}
\paragraph{Stack.}
For convenience, we use a stack data structure to describe the algorithm. 
Formally, a stack is a finite sequence. 
Given a stack $S$, we use $S(i)$ to denote the $i$-th element of the stack, starting with index 0, and we use $S.S'$ to denote concatenation. 
We use $S[0,i]$ to denote the prefix up to element $i$ of $S$.
All stacks we consider are stacks of sets. 
In a slight abuse of notation, we also use stacks as the union of their elements when it is clear from the context. 
We also introduce an operation specific to stacks of sets $S$: We define $\add(S,i,x)$ to be the stack that results from extending the set on level $i$ by element $x$.

\subsection{Unique Skolem Functions}
\newcommand{\UC}{\mathcal{U}}

Incremental determinization builds on the notion of unique Skolem functions. 
%
Let $\forall X. \exists Y.\ \varphi$ be a 2QBF in PCNF and let $\chi$ be a formula over $X$ characterizing the \emph{domain} of the Skolem functions we are currently interested in. 
We say that a variable $v\in Y$ has a \emph{unique Skolem function} for domain $\chi$, if for each assignment $\vec x$ with $\chi(\vec x)$ there is a \emph{unique} assignment $\vec v$ to $v$ such that $\varphi(\vec x,\vec v)$ is satisfiable. 
In particular, a unique Skolem function is a Skolem function:

\begin{lemma}
\label{lemma:uniqueimpliesexists}
	If all existential variables have a unique Skolem function for the full domain $\chi=\True$, the formula is true. 
\end{lemma}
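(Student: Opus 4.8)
The plan is to use the uniqueness hypothesis to \emph{define} a family of Skolem functions and then show that these functions witness the truth of the formula. Concretely, for the full domain $\chi = \True$ the hypothesis gives, for every assignment $\vec x$ to $X$ and every variable $v \in Y$, exactly one value $\vec v$ to $v$ for which $\varphi(\vec x, \vec v)$ is satisfiable. I would define $f_v(\vec x)$ to be this unique value; since $\chi = \True$ holds for every $\vec x$, this yields a total Skolem function $f_v$ for each $v \in Y$, collected into $f_Y = \{f_{y_1}, \dots, f_{y_n}\}$. By the definition of truth recalled above, it then suffices to prove that $\varphi(\vec x, f_Y(\vec x))$ holds for every $\vec x$.

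First I would observe that for each fixed $\vec x$ the formula $\varphi(\vec x)$ is satisfiable. Indeed, ``there is a unique value making $\varphi(\vec x, \vec v)$ satisfiable'' entails in particular that at least one value of $v$ extends to a satisfying assignment of the remaining existential variables, so $\varphi(\vec x)$ has at least one satisfying assignment $\vec y^*$ over $Y$.

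The key step is to show that any such satisfying assignment is forced to coincide with $f_Y(\vec x)$ on every variable. Fix an arbitrary $v \in Y$. Because $\vec y^*$ satisfies $\varphi(\vec x)$, the partial assignment setting $v$ to $\vec y^*(v)$ already extends to a satisfying assignment, namely $\vec y^*$ itself, so $\varphi(\vec x, \vec y^*(v))$ is satisfiable. Uniqueness then forces $\vec y^*(v)$ to be the unique satisfiable value, i.e. $\vec y^*(v) = f_v(\vec x)$. Since $v$ was arbitrary, $\vec y^* = f_Y(\vec x)$, and as $\vec y^*$ satisfies $\varphi(\vec x)$ we conclude that $\varphi(\vec x, f_Y(\vec x))$ holds. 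As $\vec x$ was arbitrary, this holds for all $\vec x$, which is exactly the condition for $\forall X.\exists Y.\ \varphi$ to be true.

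I do not expect a genuine obstacle here; the argument is short once the right statement is isolated. The one point deserving care is the meaning of \emph{satisfiable} in the definition: it quantifies existentially over the remaining variables of $Y$, so a variable's unique value is a property of the whole residual formula rather than of a single clause. The crux is therefore the forcing argument above --- recognizing that a single satisfying assignment must agree with \emph{all} the per-variable unique values simultaneously, so that the independently defined functions $f_v$ jointly produce a model rather than merely being locally consistent.
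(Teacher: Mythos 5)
Your proof is correct, and it is the natural argument: the paper states this lemma without proof, treating it as immediate from the definitions, so there is no alternative route to compare against. Your key step --- that any satisfying assignment $\vec y^*$ of $\varphi(\vec x)$ must agree with the unique satisfiable value $f_v(\vec x)$ for \emph{every} $v\in Y$ simultaneously, so that $f_Y(\vec x)$ is itself a (indeed the only) satisfying assignment --- is exactly what is needed to turn the independently defined per-variable functions into a joint Skolem witness, and you handle it correctly.
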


\vspace{-1mm}
The semantic characterization of unique Skolem functions above does not help us with the computation of Skolem functions directly. 
We now introduce a local approximation of unique Skolem functions and show how it can be used as a propagation procedure. 

We consider a set of variables $D\subseteq X\cup Y$ with $D\supseteq X$ and focus on the subset $\varphi|_D$ of clauses that only contain variables in $D$. 
We further assume that the existential variables in $D$ already have unique Skolem functions for $\chi$ in the formula $\varphi|_D$. 
We now define how to extend $D$ by an existential variable $v\notin D$. 
To define a Skolem function for $v$ we only consider the clauses with \emph{unique consequence} $v$, denoted $\UC_v$, that contain a literal of $v$ and otherwise only literals of variables in $D$. 
(Note that $\varphi|_D\cup\UC_v=\varphi|_{D\cup\{v\}}$). 
We define that variable $v$ has a \emph{unique Skolem function relative to $D$} for $\chi$, if for all assignments to $D$ satisfying $\chi$ and $\varphi$ there is a unique assignment to $v$ satisfying $\UC_v$. 


\newcommand{\deterministic}{\mathsf{deterministic}}
\newcommand{\conflict}{\mathsf{conflict}}
\newcommand{\unconflicted}{\mathsf{unconflicted}}

In order to determine unique Skolem functions relative to a set $D$ in practice, we split the definition into the two statements $\deterministic$ and $\unconflicted$. 
Each statement can be checked by a SAT solver and together they imply that variable $v$ has a unique Skolem function relative to $D$. 

Given a clause $C$ with unique consequence $v$, let us call $\neg (C\setminus\{v,\neg v\})$ the \emph{antecedent} of $C$. 
Further, let $\mathcal A_{l} = \bigvee_{C\in\UC_v,l\in C} \neg (C\setminus\{v,\neg v\})$ be the disjunction of antecedents for the unique consequences containing the literal $l$ of $v$. 
It is clear that whenever $\mathcal A_{v}$ is satisfied, $v$ needs to be true, and whenever $\mathcal A_{\neg v}$ is satisfied, $v$ need to be false. 
 We define:
\[
\begin{array}{lllll}
    \deterministic(v,\varphi,\chi,D) &:=~&\forall D.~\varphi|_D\wedge \chi &\Rightarrow &\mathcal A_v \vee  \mathcal A_{\neg v} \quad \\
    \unconflicted(v,\varphi,\chi,D) &:=& \forall D.~ \varphi|_D \wedge\chi ~&\Rightarrow~ \neg(&\mathcal A_v \wedge \mathcal A_{\neg v}\;)
\end{array}
\]
$\deterministic$ states that $v$ needs to be assigned either true or false for every assignment to $D$ in the domain $\chi$ that is consistent with the existing Skolem function definitions $\varphi|_D$. 
Accordingly, $\unconflicted$ states that $v$ does not have to be true and false at the same time (which would indicate a conflict) for any such assignment. 
%
%
%
%
Unique Skolem functions relative to a set $D$ approximate unique Skolem functions as follows:
\begin{lemma}
\label{lemma:uniqueSkolemInduction}
	Let the existential variables in $D$ have unique Skolem functions for domain $\chi$ and let $v\in Y$ have a unique Skolem function relative to $D$ for domain~$\chi$. 
	Then $v$ has a unique Skolem function for domain $\chi$.
\end{lemma}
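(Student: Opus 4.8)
The plan is to fix an arbitrary assignment $\vec x$ to $X$ with $\chi(\vec x)$ and to exhibit, for that $\vec x$, a single value of $v$ that is the unique one making $\varphi(\vec x,\cdot)$ satisfiable; as $\vec x$ is arbitrary this is precisely a unique Skolem function for $v$. Write $E$ for the existential variables of $D$ (so $D = X\cup E$), let $f_u$ be the given unique Skolem function of each $u\in E$, and let $f_E(\vec x)$ denote the induced assignment to $E$.

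The key observation I would establish first is that \emph{every} model $\vec y$ of $\varphi(\vec x)$ agrees with $f_E(\vec x)$ on $E$: for each $u\in E$ the assignment $\vec y$ itself witnesses that $\varphi(\vec x,u=\vec y(u))$ is satisfiable, so uniqueness of $f_u$ forces $\vec y(u)=f_u(\vec x)$. Now $\varphi(\vec x)$ has at least one model $\vec y^\ast$ (take the witness to the existence part of any $f_u$), and by the observation its restriction to $D$ equals $d := (\vec x, f_E(\vec x))$; since the clauses of $\varphi|_D$ lie in $\varphi$ and mention only variables of $D$, we get $d\models\chi\wedge\varphi|_D$. At this point I would apply the relative-uniqueness hypothesis to $d$, obtaining the unique value $b$ for $v$ with $d,\,v=b\models\UC_v$, and claim that $b$ is the sought Skolem value.

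Two checks then finish the argument. For existence, the model $\vec y^\ast$ restricts to $d$ on $D$ and satisfies $\varphi\supseteq\UC_v$, so $\vec y^\ast(v)$ satisfies $\UC_v$ under $d$; relative uniqueness forces $\vec y^\ast(v)=b$, so $\varphi(\vec x,v=b)$ is satisfiable. For uniqueness, any model witnessing satisfiability of $\varphi(\vec x,v=b')$ is in particular a model of $\varphi(\vec x)$, hence restricts to $d$ and satisfies $\UC_v$, so the same step yields $b'=b$. Thus $b$ is the unique value for $v$ under which $\varphi(\vec x,\cdot)$ is satisfiable.

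The step I expect to be the main obstacle is the bridge between the \emph{local} condition defining $\UC_v$, which constrains only assignments to $D$, and the \emph{global} notion of satisfiability obtained by projecting onto the remaining existential variables. The observation that every global model is pinned to $f_E(\vec x)$ on $D$ is exactly what makes it legitimate to transfer the relative-uniqueness statement to global models; note in particular that although $v$ may occur in clauses outside $\UC_v$, a model's value for $v$ must still satisfy $\UC_v$ and is therefore determined. A secondary point to handle with care is that $\varphi(\vec x)$ must have a model at all; this is immediate from the existence clauses of the $f_u$ unless $E$ is empty, in which case one additionally relies on $\chi$ ranging only over $\vec x$ for which $\varphi(\vec x)$ is satisfiable.
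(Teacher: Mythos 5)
The paper states this lemma without proof, so there is no official argument to compare against line by line; judged on its own merits and against the paper's definitions, the skeleton of your argument is sound and almost certainly the intended one: pin every model down to $f_E(\vec x)$ on the already-determined existentials, then invoke relative uniqueness at the single assignment $d=(\vec x, f_E(\vec x))$. For $E\neq\emptyset$ your two checks (existence and uniqueness of the value $b$) are carried out correctly.

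The genuine gap is the point you dismiss as ``secondary'': the case $E=\emptyset$, together with the reading of ``unique Skolem function'' that you adopt. You interpret both hypothesis and conclusion as satisfiability of the \emph{full} matrix $\varphi$. Under that reading the lemma is false when $D=X$: take $\forall x.\,\exists v,v'.\ (x\lor v)\wedge(\neg x\lor\neg v)\wedge(v')\wedge(\neg v')$ with $\chi$ identically true. Then $\mathcal{U}_v=\{(x\lor v),(\neg x\lor\neg v)\}$ gives $v$ a unique Skolem function relative to $D=X$ (the hypothesis on $D$'s existentials is vacuous), yet $\varphi(\vec x)$ is unsatisfiable for every $\vec x$, so no value of $v$ makes $\varphi(\vec x,\cdot)$ satisfiable and your conclusion fails. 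Your proposed repair---assume $\chi$ ranges only over $\vec x$ for which $\varphi(\vec x)$ is satisfiable---is not a hypothesis of the lemma, and it is unavailable exactly where the lemma is needed: it underwrites \textsc{Propagate} maintaining Invariant~\ref{inv:Dconsistency}, which must hold during runs on \emph{false} formulas, where assignments $\vec x$ with $\varphi(\vec x)$ unsatisfiable necessarily remain in $\chi$ (cf.\ Invariant~\ref{inv:psi}); moreover $D=X$ is the algorithm's initial state, so the case you set aside is the base case of the induction, not a corner case. The reading the paper intends is the relativized one spelled out just before the lemma (``unique Skolem functions for $\chi$ \emph{in the formula} $\varphi|_D$'') and in Invariant~\ref{inv:Dconsistency} (``in the formula $C|_D$''): the hypothesis concerns $\varphi|_D$ and the conclusion concerns $\varphi|_{D\cup\{v\}}=\varphi|_D\cup\mathcal{U}_v$. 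With that reading your argument goes through essentially verbatim---every model of $\varphi|_D(\vec x)$ over $E$ equals $f_E(\vec x)$, so satisfiability of $(\varphi|_D\cup\mathcal{U}_v)(\vec x,v=b')$ forces $b'=b$---and the $E=\emptyset$ worry disappears, since $\varphi|_X$ is the empty set of clauses by the paper's standing assumption that every clause of $\varphi$ contains an existential literal.
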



\section{Inference Rules for Incremental Determinization}
\label{sec:inference-rules}

In this section, we develop a nondeterministic algorithm that formalizes and generalizes ID. 
We describe the algorithm in terms of inference rules that specify how the state of the algorithm can be developed. 
The state of the algorithm consists of the following elements:
\newcommand{\Ready}{\mathsf{Ready}}
\newcommand{\Conflict}{\mathsf{Conflict}}
\newcommand{\SAT}{\mathsf{SAT}}
\newcommand{\UNSAT}{\mathsf{UNSAT}}
\newcommand{\Sat}{\mathit{Sat}}
\newcommand{\dlvl}{\mathit{dlvl}}
\begin{itemize}
	\item The solver status $S\in\{\Ready,\Conflict(L,\vec x),\SAT,\UNSAT\}$. The conflict status has two parameters: a clause $L$ that is used to compute the learnt clause and the assignment $\vec x$ to the universals witnessing the conflict.
	\item A stack $C$ of sets of clauses. $C(0)$ contains the original and the learnt clauses. $C(i)$ for $i>0$ contain temporary clauses introduced by decisions. 
	\item A stack $D$ of sets of variables. The union of all levels in the stack represent the set of variables that currently have unique Skolem functions and the clauses in $C|_D$ represent these Skolem functions. $D(0)$ contains the universals and the existentials whose Skolem functions do not depend on decisions. 
	\item A formula $\chi$ over $D(0)$ characterizing the set of assignments to the universals for which we still need to find a Skolem function. 
	\item A formula $\alpha$ over variables $D(0)$ representing a \emph{temporary} restriction on the domain of the Skolem functions.
\end{itemize}

We assume that we are given a 2QBF in PCNF $\forall X. \exists Y.~\varphi$ and that all clauses in $\varphi$ contain an existential variable.
(If $\varphi$ contains a non-tautological clause without existential variables, the formula is trivially false by universal reduction.)
We define $(\Ready,\varphi,X, \True, \True)$ to be the initial state of the algorithm. 
That is, the clause stack $C$ initially has height~1 and contains the clauses of the formula $\varphi$.
We initialize $D$ as the stack of height~1 containing the universals.

Before we dive into the inference rules, we want to point out that some of the rules in this calculus are not computable in polynomial time. 
The judgements $\deterministic$ and $\unconflicted$ require us to solve a SAT problem and are, in general, NP-complete. 
This is still easier than the 2QBF problem itself (unless NP includes $\Pi_2^P$) and in practice they can be discharged quickly by SAT solvers.

\subsection{True QBF} 

We continue with describing the basic version of ID, consisting of the rules in Fig.~\ref{fig:true} and Fig.~\ref{fig:false}, and first focus on the rules in Fig.~\ref{fig:true}, which suffice to prove true 2QBFs. 
\textsc{Propagate} allows us to add a variable to $D$, if it has a unique Skolem function relative to $D$.
(The notation $\add(D,|D|-1,v)$ means that we add $v$ to the last level of the stack.) 
The judgements $\deterministic$ and $\unconflicted$ involve the current set of clauses $C$ (i.e. the union of all sets of clauses in the sequence $C$).
These checks are restricted to the domain $\chi\wedge\alpha$. Both $\chi$ and $\alpha$ are true throughout this section; we discuss their use in Section~\ref{sec:CEGAR} and Section~\ref{sec:cases}. 

\begin{figure}[t]
\vspace{-2mm}
\begin{center}
\begin{tcolorbox}
\begin{minipage}{12.1cm}
$
\phantom{Propagate}~
\inferrule*[Left=Propagate]{(\Ready,C,D,\chi,\alpha) \\ v\notin D \\\\ \deterministic(v,C,\chi\wedge\alpha,D) \\\unconflicted(v,C,\chi\wedge\alpha,D)}
          {(\Ready,C,\add(D,|D|-1,v),\chi,\alpha)}
$
\end{minipage}

\vspace{5mm}
\begin{minipage}{11.5cm}
$
\phantom{Decide}~
\inferrule*[Left=Decide]{(\Ready,C,D,\chi,\alpha) \\ v\notin D \\ \text{all }c\in\delta\text{ have unique consequence } v}
          {(\Ready, C.\delta, D.\emptyset, \chi, \alpha)}
$
\end{minipage}

\vspace{5mm}
\begin{minipage}{12.1cm}
$
\phantom{Sat}~
\inferrule*[Left=Sat]{(\Ready,C,D,\chi,\True) \\ D=X\cup Y}
          {(\SAT,C,D,\chi,\True)}
$
\end{minipage}
\end{tcolorbox}
\end{center}
\vspace{-5mm}
\caption{Inference rules needed to prove true QBF}
\label{fig:true}
\end{figure}

\begin{invariant}
\label{inv:Dconsistency}
	All existential variables in $D$ have a unique Skolem function for the domain $\chi\wedge\alpha$ in the formula $\forall X. \exists Y.~ C|_D$, where $C|_D$ are the clauses in $C$ that contain only variables in $D$. 
\end{invariant}

If \textsc{Propagate} identifies all variables to have unique Skolem functions relative to the growing set $D$, we know that they also have unique Skolem functions (Lemma~\ref{lemma:uniqueSkolemInduction}).
We can then apply \textsc{Sat} to reach the $\SAT$ state, representing that the formula has been proven true (Lemma~\ref{lemma:uniqueimpliesexists}).

\begin{lemma}
\label{prop:soundnessFalse}
ID cannot reach the SAT state for false QBF.
\end{lemma}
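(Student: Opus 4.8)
The plan is to prove the contrapositive structurally: show that whenever the algorithm reaches the $\SAT$ state, the input formula must be true. Since this section only uses the three rules of Fig.~\ref{fig:true} (the false-branch rules come later), reaching $\SAT$ means the final step was an application of \textsc{Sat}, whose premise requires the state $(\Ready,C,D,\chi,\True)$ with $D = X\cup Y$ and $\alpha = \True$. The key observation is that Invariant~\ref{inv:Dconsistency} is maintained along every run, so at the moment \textsc{Sat} fires, all existential variables in $D$ have a unique Skolem function for domain $\chi\wedge\alpha$ in $\forall X.\exists Y.~C|_D$.

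First I would argue that the invariant is preserved by each rule. For \textsc{Propagate}, the premises $\deterministic$ and $\unconflicted$ are exactly the two conditions that, per the discussion preceding Lemma~\ref{lemma:uniqueSkolemInduction}, establish that $v$ has a unique Skolem function \emph{relative to} $D$; Lemma~\ref{lemma:uniqueSkolemInduction} then promotes this to a genuine unique Skolem function for domain $\chi\wedge\alpha$, so adding $v$ to the last level of $D$ keeps the invariant intact. For \textsc{Sat} the state is unchanged except for the status, so the invariant trivially carries over; \textsc{Decide} does not apply on the true-only fragment in a way that threatens the conclusion (and in any case pushes an empty variable level, leaving the set of existentials with unique Skolem functions unchanged).

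With the invariant in hand at the $\SAT$ state, I would instantiate it with $D = X\cup Y$ and $\alpha = \True$: every existential variable has a unique Skolem function for the full domain $\chi\wedge\True = \chi$ in $\forall X.\exists Y.~C|_D$. The final step is to connect this back to the original formula $\varphi$. Here I would note that $C(0)$ contains the original clauses, that the clauses added to $C$ by decisions were only ever placed on higher stack levels, and that restricting to $C|_{X\cup Y}$ recovers (a formula equivalent to) all of $\varphi$. Together with the fact that a unique Skolem function is in particular a Skolem function (Lemma~\ref{lemma:uniqueimpliesexists}), this yields the truth of $\forall X.\exists Y.~\varphi$, contradicting the assumption that it was false.

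The main obstacle I anticipate is the bookkeeping around $\chi$ and $\alpha$ and around the relationship between $C|_D$ and the original $\varphi$. The invariant is stated for the formula $\forall X.\exists Y.~C|_D$ rather than for $\varphi$ directly, so I must be careful that the learnt and decision clauses accumulated in $C$ do not change the set of models relevant to truth---learnt clauses are implied consequences (by resolution/reduction) and hence sound to add, while decision clauses, being confined to higher stack levels, do not appear in $C|_{X\cup Y}$ once $D$ spans all variables on a single conceptual level. Making this identification of $C|_{X\cup Y}$ with $\varphi$ precise, and confirming that $\chi = \True$ holds throughout this fragment so that ``full domain'' truly means all universal assignments, is where the argument needs the most care.
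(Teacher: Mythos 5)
Your overall skeleton matches the paper's proof---reaching $\SAT$ forces an application of \textsc{Sat} with $D=X\cup Y$, and Invariant~\ref{inv:Dconsistency} then gives unique Skolem functions for all existentials in $\forall X.\exists Y.~C|_D$---but your final step connecting $C|_D$ back to $\varphi$ contains a genuine error. You claim that decision clauses, ``being confined to higher stack levels, do not appear in $C|_{X\cup Y}$,'' so that $C|_{X\cup Y}$ recovers a formula equivalent to $\varphi$. This misreads the restriction operator: $C|_D$ filters clauses by their \emph{variables} (it keeps the clauses containing only variables of $D$), not by stack level. Every decision clause is a clause over $X\cup Y$, so once $D=X\cup Y$ we have $C|_D=C$, decision clauses included; and those clauses are genuinely not implied by $\varphi$---Invariant~\ref{inv:phieqc0} asserts equivalence only for $C(0)$. (In the paper's running example, the decision clause $\neg y_1\vee\neg y_2\vee y_3$ is not a consequence of the input formula.) So the equivalence $C|_{X\cup Y}\equiv\varphi$ that your argument leans on is false in general. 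The same confusion surfaces in your treatment of \textsc{Decide}: the invariant survives that rule not because the new level of $D$ is empty, but because every clause of $\delta$ contains the variable $v\notin D$ and therefore falls outside $C|_D$.

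The repair is short and is exactly the paper's argument: you never need equivalence, only the inclusion $\varphi\subseteq C$. The Skolem functions supplied by Invariant~\ref{inv:Dconsistency} (genuine Skolem functions by Lemma~\ref{lemma:uniqueimpliesexists}) satisfy \emph{every} clause of $C$, hence in particular every clause of $\varphi$, so $\forall X.\exists Y.~\varphi$ is true, contradicting falsity. Decision and learnt clauses only strengthen $C$, so satisfying $C$ can only be harder than satisfying $\varphi$; this one-directional containment is all that soundness requires. (Your worry about $\chi$ resolves itself: no rule of the basic system modifies $\chi$, so $\chi=\True$ when \textsc{Sat} fires; arbitrary $\chi$ is handled later via Invariant~\ref{inv:psi}.)
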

\begin{proof}
  Let us assume we reached the $\SAT$ state for a false 2QBF and prove the statement by way of contradiction. 
  The $\SAT$ state can only be reached by the rule \textsc{Sat} and requires $D = X \cup Y$. 
  By Invariant~\ref{inv:Dconsistency} all variables have a Skolem function in $\forall X.\exists Y.~ C$. 
  Since $C$ includes $\varphi$, this Skolem function does not violate any clause in $\varphi$, which means it is indeed a proof.
  \qed
\end{proof}

When \textsc{Propagate} is unable to determine the existence of a unique Skolem function (i.e. for variables where the judgement $\deterministic$ does not hold) we can use the rule \textsc{Decide} to introduce additional clauses such that $\deterministic$ holds and propagation can continue. 
Note that additional clauses make it easier to satisfy $\deterministic$ and adding the clause $v$ (i.e. a unit clause) even \emph{ensures} that $\deterministic$ holds for $v$. 

Assuming we consider a true 2QBF, we can pick a Skolem function $f_{y}$ for each existential variable $y$ and encode it using \textsc{Decide}. 
We can simply consider the truth table of $f_y$ in terms of the universal variables and define $\delta$ to be the set of clauses $\{\neg\vec x \vee v\mid f_y(\vec x)\}\cup\{\neg\vec x \vee \neg v\mid \neg f_y(\vec x)\}$. (Here we interpret the assignment $\vec x$ as a conjunction of literals.)
These clauses have unique consequence $v$ and they guarantee that $v$ is deterministic.
Further, they guarantee that $v$ is $\unconflicted$, as otherwise $f_y$ would not be a Skolem function, so we can apply \textsc{Propagate} to add $v$ to $D$.
Repeating this process for every variable let us reach the point where $Y\subseteq D$ and we  can apply \textsc{Sat} to reach the $\SAT$ state. 

\begin{lemma}
\label{prop:completenessTrue}
ID can reach the SAT state for true QBF.
\end{lemma}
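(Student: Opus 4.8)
The plan is to exhibit, for any true 2QBF $\forall X.\exists Y.~\varphi$, an explicit sequence of rule applications that starts in the initial state and ends in the $\SAT$ state. Since the formula is true, I fix a family of Skolem functions $f_Y=\{f_{y_1},\dots,f_{y_n}\}$ witnessing its truth, so that $\varphi(\vec x,f_Y(\vec x))$ holds for every assignment $\vec x$ to $X$. I would then process the existential variables in an arbitrary fixed order $y_1,\dots,y_n$, and for each $y_k$ first apply \textsc{Decide} with the truth-table encoding
\[
\delta_k \;=\; \{\neg\vec x\vee y_k \mid f_{y_k}(\vec x)\}\;\cup\;\{\neg\vec x\vee\neg y_k \mid \neg f_{y_k}(\vec x)\},
\]
and immediately afterwards apply \textsc{Propagate} to move $y_k$ into $D$. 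After all $n$ rounds we have $D=X\cup Y$, while $\alpha=\True$ throughout (neither \textsc{Decide} nor \textsc{Propagate} touches $\alpha$), so \textsc{Sat} applies and yields the $\SAT$ state.

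Next I would verify that each application is legal. For \textsc{Decide} it suffices that $y_k\notin D$ (true, since each variable is treated once) and that every clause of $\delta_k$ has unique consequence $y_k$; the latter holds because each such clause mentions only $y_k$ and universals $X\subseteq D(0)\subseteq D$. For \textsc{Propagate} I must discharge $\deterministic(y_k,C,\chi\wedge\alpha,D)$ and $\unconflicted(y_k,C,\chi\wedge\alpha,D)$ with $\chi=\alpha=\True$. Determinism is the easy direction: the antecedents $\delta_k$ contributes to $\mathcal A_{y_k}$ and $\mathcal A_{\neg y_k}$ are $\bigvee_{f_{y_k}(\vec x)}\vec x$ and $\bigvee_{\neg f_{y_k}(\vec x)}\vec x$, whose disjunction is a tautology over the assignments to $X$ (each such assignment fires exactly one), and antecedents contributed by other clauses in $\UC_{y_k}$ only enlarge these disjunctions, so $\mathcal A_{y_k}\vee\mathcal A_{\neg y_k}$ stays valid.

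The main obstacle is $\unconflicted$, because $\UC_{y_k}$ contains not only $\delta_k$ but also original and learnt clauses that mention $y_k$ and otherwise only variables already in $D$; these contribute extra antecedents that could, a priori, force $y_k$ in both directions at once. I would rule this out using the fact that $C|_D$ already contains the previously encoded tables $\delta_1,\dots,\delta_{k-1}$: hence any assignment $\vec d$ to $D$ satisfying $C|_D$ must pin $y_i=f_{y_i}(\vec d|_X)$ for all $i<k$, so $\vec d$ coincides on all of $D\setminus\{y_k\}$ with the total Skolem assignment at $\vec x=\vec d|_X$. The $\delta_k$ clauses fire exactly one of $\mathcal A_{y_k},\mathcal A_{\neg y_k}$ under $\vec d$; say $\mathcal A_{y_k}$, when $f_{y_k}(\vec d|_X)$ is $\True$. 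A conflict would require a clause $(\neg y_k\vee R)\in\UC_{y_k}$ with remainder $R$ false under $\vec d$, where $R$ ranges only over $D\setminus\{y_k\}$. But there $\vec d$ agrees with the Skolem assignment, which satisfies $\varphi$ and sets $y_k=\True$; since $\neg y_k$ is then false, $R$ must be true, contradicting $R$ false. The symmetric argument handles the case $f_{y_k}(\vec d|_X)=\False$. Thus no assignment fires both antecedents, $\unconflicted$ holds, and with both premises discharged at every round the described run reaches $\SAT$, proving the claim.
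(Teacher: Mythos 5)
Your proposal is correct and takes essentially the same route as the paper's own proof: fix Skolem functions witnessing truth, encode the truth table of each $f_{y_k}$ as clauses via \textsc{Decide}, immediately \textsc{Propagate} each variable into $D$, and conclude with \textsc{Sat} once $D=X\cup Y$. Your detailed verification of the $\unconflicted$ premise (using that the earlier truth-table clauses pin $\vec d$ to the Skolem assignment, so any conflicting clause would be falsified by a satisfying Skolem assignment) is simply a careful expansion of the paper's one-line remark that a conflict would contradict $f_{y}$ being a Skolem function.
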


Note that proving the truth of a QBF in this way requires guessing correct Skolem functions for all existentials. 
In Subsection~\ref{ssec:termination} we discuss how termination is guaranteed with a simpler type of decisions. 

\begin{figure}[t]
\vspace{-2mm}
\begin{center}
\begin{tcolorbox}

\begin{minipage}{12.1cm}
$	
\phantom{Conflict}~
\inferrule*[Left=Conflict]{(\Ready,C,D,\chi,\alpha) \\ \vec x\text{ refutes }\unconflicted(v,C,\chi\wedge\alpha,D)}
          {(\Conflict(\{v,\neg v\},\vec x),C,D,\chi,\alpha)}
$
\end{minipage}

\vspace{5mm}
\begin{minipage}{12.1cm}
$
\phantom{Analyze}~
\inferrule*[Left=Analyze]{(\Conflict(L,\vec x),C,D,\chi,\alpha) \\ c\in C(0) \\ l\in L \\  \overline l\in c }
          {(\Conflict(L \otimes_{\var(l)} c,\vec x),C,D,\chi,\alpha)}
$
\end{minipage}

\vspace{5mm}
\begin{minipage}{12.1cm}
$
\phantom{Learn}~
\inferrule*[Left=Learn]{(\Conflict(L,\vec x),C,D,\chi,\alpha) \\ \var(L)\not\subseteq D}
          {(\Ready,\add(C,0,L),D,\chi,\alpha)}
$
\end{minipage}

\vspace{5mm}
\begin{minipage}{12.1cm}
$
\phantom{Unsat}~
\inferrule*[Left=Unsat]{(\Conflict(L,\vec x),C,D, \chi, \alpha) \\ \var(L)\subseteq D(0) \\ \vec x \not\models L}
          {(\UNSAT,C,D,\chi,\alpha)}
$
\end{minipage}

\vspace{5mm}
\begin{minipage}{12.1cm}
$
\phantom{Backtrack}~~
\inferrule*[Left=Backtrack]{(S,C,D,\chi,\alpha) \\ 0<\dlvl \leq |C|}
          {(S,C[0,{\dlvl}],D[0,{\dlvl}],\chi,\alpha)}
$
\end{minipage}
\end{tcolorbox}
\end{center}
\vspace{-6mm}
\caption{Additional inference rules needed to disprove false QBF}
\label{fig:false}
\end{figure}

\subsection{False QBF}
\label{ssec:unsat}

To disprove false 2QBFs, i.e. formulas that do not have a Skolem function, we need the rules in Fig.~\ref{fig:false} in addition to \textsc{Propagate} and \textsc{Decide} from Fig.~\ref{fig:true}. 
The $\conflict$ state can only be reached via the rule \textsc{Conflict}, which requires that a variable $v$ is conflicted, i.e. $\unconflicted$ fails. 
The \textsc{Conflict} rule stores the assignment $\vec x$ to $D$ that proves the conflict and it creates the nucleus of the learnt clause $\{v,\neg v\}$. 
Via \textsc{Analyze} we can then resolve that nucleus with clauses in $C(0)$, which consists of the original clauses and the clauses learnt so far. 
We are allowed to add the learnt clause back to $C(0)$ by applying \textsc{Learn}. 

\begin{invariant}
\label{inv:phieqc0}
	$C(0)$ is equivalent to $\varphi$.
\end{invariant}

Note that $C(0)$ and $\varphi$ are propositional formulas over $X\cup Y$. 
Their equivalence means that they have the same set of satisfying assignments. 
We prove Invariant~\ref{inv:phieqc0} together with the next invariant. 

\begin{invariant}
\label{inv:Limpliedbyphi}
	Clause $L$ in conflict state $\Conflict(L,\vec x)$ is implied by $\varphi$.
\end{invariant}

\vspace{-2mm}
\begin{proof}
	$C(0)$ contains $\varphi$ initially and is only ever changed by adding clauses through the \textsc{Learn} rule, so $C(0)\Rightarrow\varphi$ holds throughout the computation.

	We prove the other direction of Invariant~\ref{inv:phieqc0} and Invariant~\ref{inv:Limpliedbyphi} by mutual induction.
	Initially, $C(0)$ consists exactly of the clauses $\varphi$, satisfying Invariant~\ref{inv:phieqc0}.
	The nucleus of the learnt clause $v\vee\neg v$ is trivially true, so it is implied by any formula, which gives us the base case of Invariant~\ref{inv:Limpliedbyphi}. 
	\textsc{Analyze} is the only rule modifying $L$, and hence soundness of resolution together with Invariant~\ref{inv:phieqc0} already gives us the induction step for Invariant~\ref{inv:Limpliedbyphi}~\cite{Robinson/1965/Resolution}.
	Since \textsc{Learn} is the only rule changing $C(0)$, Invariant~\ref{inv:Limpliedbyphi} implies the induction step of Invariant~\ref{inv:phieqc0}. \qed
\end{proof}

When adding the learnt clause to $C(0)$ we have to make sure that Invariant~\ref{inv:Dconsistency} is preserved. 
\textsc{Learn} hence requires that we have backtracked far enough with \textsc{Backtrack}, such that at least one of the variables in $L$ is not in $D$ anymore. 
In this way, $L$ may become part of future Skolem function definitions, but will first have to be checked for causing conflicts by \textsc{Propagate}. 
	
If all variables in $L$ are in $D(0)$ and the assignment $\vec x$ from the conflict violates $L$, we can conclude the formula to be false using \textsc{Unsat}. 
The soundness of this step follows from the fact that $\vec x$ includes an assignment satisfying $C(0)|_{D(0)}$ (i.e. the clauses defining the Skolem functions for $D(0)$), Invariant~\ref{inv:Dconsistency} and Invariant~\ref{inv:Limpliedbyphi}.

\begin{lemma}
\label{prop:soundnessTrue}
ID cannot reach the UNSAT state for true QBF.
\end{lemma}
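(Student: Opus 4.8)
The plan is to argue by contradiction: assume ID reaches the $\UNSAT$ state while the input $\forall X. \exists Y.~\varphi$ is true. Since $\UNSAT$ is reachable only through \textsc{Unsat}, the immediately preceding state must be a conflict $\Conflict(L,\vec x)$ with $\var(L)\subseteq D(0)$ and $\vec x\not\models L$. As the formula is true, I fix Skolem functions $f_Y$ so that $\varphi(\vec x', f_Y(\vec x'))$ holds for every universal assignment $\vec x'$. The goal is then to show that the conflict assignment $\vec x$ must in fact satisfy $L$, contradicting the side condition of \textsc{Unsat}.

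First I would pin down the status of $\vec x$. It originates from a refutation of $\unconflicted$, so at the moment the conflict was created $\vec x$ was a total assignment to the then-current $D$ satisfying $C|_D\wedge\chi\wedge\alpha$; in this section $\chi=\alpha=\True$, hence $\vec x$ satisfies $C|_D$. Because $X\subseteq D(0)\subseteq D$ and every clause of $C(0)|_{D(0)}$ lies in $C|_D$, the restriction $\vec x|_{D(0)}$ is a total assignment to $D(0)$ satisfying $C(0)|_{D(0)}$. Since \textsc{Backtrack} only ever truncates levels $\ge 1$, neither $C(0)$ nor $D(0)$ changes, so this remains true in the state to which \textsc{Unsat} is applied. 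Write $\vec x_X := \vec x|_X$ for the universal part.

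The key step is to show that $(\vec x_X, f_Y(\vec x_X))$ and $\vec x$ agree on all of $D(0)$. By Invariant~\ref{inv:Dconsistency}, the existentials in $D(0)$ have a unique Skolem function in $\forall X. \exists Y.~C(0)|_{D(0)}$, so for the universal assignment $\vec x_X$ the values of those existentials in any model of $C(0)|_{D(0)}$ are forced. Since $\vec x$ models $C(0)|_{D(0)}$, its existential part on $D(0)$ realises these forced values. On the other hand, $(\vec x_X, f_Y(\vec x_X))$ models $\varphi$ by the choice of $f_Y$, hence models $C(0)$ by Invariant~\ref{inv:phieqc0}, hence models the subset $C(0)|_{D(0)}$; by the same uniqueness its restriction to the existentials of $D(0)$ coincides with that of $\vec x$. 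Together with the trivial agreement on the universals, the two assignments agree on all of $D(0)$.

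To close, I would combine this with the implication chain for $L$. By Invariant~\ref{inv:Limpliedbyphi} together with Invariant~\ref{inv:phieqc0}, $L$ is implied by $\varphi$, so the model $(\vec x_X, f_Y(\vec x_X))$ satisfies $L$. Because $\var(L)\subseteq D(0)$ and the two assignments agree on $D(0)$, the clause $L$ receives the same truth value under $\vec x$, i.e.\ $\vec x\models L$ — contradicting $\vec x\not\models L$. I expect the main obstacle to be precisely this agreement step: one is tempted to compare the arbitrary global model $f_Y$ with $\vec x$ directly, but the argument only succeeds because uniqueness (Invariant~\ref{inv:Dconsistency}) forces \emph{every} model of $C(0)|_{D(0)}$ to coincide with $\vec x$ on the existentials of $D(0)$, and one must be careful to read this invariant at the backtracked state, where $\var(L)\subseteq D(0)$ is guaranteed and $D(0)$ has been untouched by \textsc{Backtrack}.
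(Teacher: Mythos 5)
Your proof is correct and follows essentially the same route as the paper: the paper's (very terse) justification of \textsc{Unsat} soundness rests on exactly the three facts you develop — that $\vec x$ satisfies $C(0)|_{D(0)}$, that Invariant~\ref{inv:Dconsistency} forces any model of these clauses (in particular the one induced by true Skolem functions, via Invariant~\ref{inv:phieqc0}) to agree with $\vec x$ on $D(0)$, and that Invariant~\ref{inv:Limpliedbyphi} then yields the contradiction $\vec x \models L$. Your elaboration of the agreement step, including reading Invariant~\ref{inv:Dconsistency} at level $0$ (the paper's ``clauses defining the Skolem functions for $D(0)$'') and noting that \textsc{Backtrack} and \textsc{Analyze} leave $C(0)$, $D(0)$, and $\vec x$ untouched, is precisely the reasoning the paper compresses into a single sentence before the lemma.
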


We will now show that we can disprove any false QBF. 
The main difficulty in this proof is to show that from any $\Ready$ state we can learn a \emph{new} clause, i.e. a clause that is semantically different to any clause in $C(0)$, and then return to the $\Ready$ state. 
Since there are only finitely many semantically different clauses over variables $X\cup Y$, and we cannot terminate in any other way (Lemma~\ref{prop:soundnessTrue}), we eventually have to find a clause $L$ with $\var(L)\subseteq D(0)$, which enables us to go to the $\UNSAT$ state. 

From the $\Ready$ state, we can always add more variables to $D$ with \textsc{Decide} and \textsc{Propagate}, until we reach a conflict.
(Otherwise we would reach a state where $D=Y$ we were able to prove $\SAT$, contradicting Lemma~\ref{prop:soundnessTrue}.)
We only enter a $\Conflict$ state for a variable $v$, if there are two clauses $(c_1\vee v)$ and $(c_2\vee\neg v)$ with unique consequence $v$ such that $\vec x\models \neg c_1 \land \neg c_2$ (see definition of $\unconflicted$). 

In order to apply \textsc{Analyze}, we need to make sure that $(c_1\vee v)$ and $(c_2\vee\neg v)$ are in $C(0)$.
We can guarantee this by restricting \textsc{Decide} as follows:
We say a decision for a variable $v'$ is \emph{consistent with the unique consequences} in state $(\Ready,C,D,\chi,\alpha)$, if $\unconflicted(v,C.\delta,\chi\wedge\alpha,D)$. 
We can construct such a decision easily by applying \textsc{Decide} only on variables that are not conflicted already (i.e. $\unconflicted(v,C,\chi\wedge\alpha,D)$) and by defining $\delta$ to be the CNF representation of $\neg\mathcal A_{v}\Rightarrow \neg v$ (i.e. require $v$ to be false, unless a unique consequence containing literal $v$ applies).
It is clear that for this $\delta$ no new conflict for $v$ is introduced and hence $\unconflicted(v,C.\delta,\chi\wedge\alpha,D)$. 

Assuming that all decisions are taken consistent with the unique consequences, we know that when we encounter a conflict for variable $v$, we did not apply \textsc{Decide} for $v$, and hence the clauses $(c_1\vee v)$ and $(c_2\vee\neg v)$ causing the conflict must be in $C(0)$.
We can hence apply \textsc{Analyze} twice with clauses $(c_1\vee v)$ and $(c_2\vee\neg v)$ and obtain the learnt clause $L=c_1\vee c_2$.
Since $\vec x\models \neg c_1 \land \neg c_2$, the learnt clause is violated by $\vec x$. 
As $\vec x$ refutes $\unconflicted(v,C,\chi\land\alpha,D)$ by construction, it must satisfy the clauses $C|_D$ and learnt clause $L$ hence cannot be in $C|_D$. 
Further, $L$ only contains variables that are in $D$, as $(c_1\vee v)$ and $(c_2\vee\neg v)$ were clauses with unique consequence $v$. 
So, $L$ would have been in $C|_D$, if it existed in $C$ already, and hence $L$ is new. 
We can either add the new clause to $C(0)$ after backtracking, or we can conclude $\UNSAT$.

\begin{lemma}
\label{prop:completenessFalse}
ID can reach the $\UNSAT$ state for false QBF.	
\end{lemma}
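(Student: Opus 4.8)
The plan is a progress-plus-finiteness argument. I would show that from every $\Ready$ state, ID can either transition directly to $\UNSAT$ or else learn a clause that is semantically \emph{new} — not equivalent to any clause already in $C(0)$ — and return to a $\Ready$ state. Since there are only finitely many semantically distinct clauses over $X\cup Y$, such new clauses cannot be produced indefinitely; and because the QBF is false, the $\SAT$ state is unreachable by Lemma~\ref{prop:soundnessFalse}, so no other terminal behavior is available and the process must eventually reach $\UNSAT$. The work thus splits into three tasks: forcing a conflict from an arbitrary $\Ready$ state, extracting a learnt clause from it via \textsc{Analyze}, and proving that clause is genuinely new.

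For the forced conflict, starting in $\Ready$ I would repeatedly apply \textsc{Decide} and \textsc{Propagate} to grow $D$. This must eventually produce a conflict, for otherwise $D$ would grow to all of $X\cup Y$ and \textsc{Sat} would apply, again contradicting Lemma~\ref{prop:soundnessFalse}. By the definition of $\unconflicted$, the conflict on some variable $v$ is witnessed by two clauses $(c_1\vee v)$ and $(c_2\vee\neg v)$ with unique consequence $v$ together with an assignment $\vec x$ satisfying $\neg c_1\wedge\neg c_2$.

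For \textsc{Analyze} to apply, these clauses must lie in $C(0)$. The key device is to restrict \textsc{Decide} to decisions that are \emph{consistent with the unique consequences}: never decide on an already-conflicted variable, and let $\delta$ be the CNF of $\neg\mathcal A_v\Rightarrow\neg v$, which introduces no fresh conflict on $v$. Under this discipline, any variable $v$ on which a conflict later arises was never the target of a \textsc{Decide}, so its conflicting clauses are original or learnt and reside in $C(0)$. Resolving the nucleus $\{v,\neg v\}$ against them by two applications of \textsc{Analyze} then yields the learnt clause $L = c_1\vee c_2$.

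For novelty I would observe that $\vec x$ refutes $\unconflicted(v,C,\chi\wedge\alpha,D)$, hence is consistent with the current Skolem definitions and satisfies $C|_D$, yet falsifies $L$ since $\vec x\models\neg c_1\wedge\neg c_2$; thus $L\notin C|_D$. But $\var(L)\subseteq D$ because both clauses had unique consequence $v$, so were $L$ already present in $C$ it would belong to $C|_D$, a contradiction. Hence $L$ is new. Depending on whether $\var(L)\subseteq D(0)$, I either conclude $\UNSAT$ via \textsc{Unsat}, or \textsc{Backtrack} far enough that some variable of $L$ leaves $D$, apply \textsc{Learn}, and return to $\Ready$ with $C(0)$ strictly enlarged. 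The main obstacle is precisely this novelty step together with the \textsc{Decide} restriction feeding it: one must ensure that the restricted strategy never manufactures a conflict on the decided variable (so conflicting clauses sit in $C(0)$ and are available to \textsc{Analyze}) and that the resolvent's variables all lie in $D$ (which is what turns ``$L$ violated by $\vec x$ while $\vec x\models C|_D$'' into genuine semantic novelty). Combining this with the finiteness bound to exclude nontermination is the crux of the argument.
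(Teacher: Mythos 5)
Your proposal is correct and follows essentially the same route as the paper's own proof: force a conflict by growing $D$ with \textsc{Propagate} and \textsc{Decide} restricted to decisions consistent with the unique consequences, resolve the nucleus twice via \textsc{Analyze} to get $L=c_1\vee c_2$, argue novelty from the fact that $\vec x$ satisfies $C|_D$ but falsifies $L$ with $\var(L)\subseteq D$, and conclude by finiteness of semantically distinct clauses. The only cosmetic divergence is the citation for why $\SAT$ is unreachable: you cite Lemma~\ref{prop:soundnessFalse}, which indeed matches the intended content, whereas the paper's text points to Lemma~\ref{prop:soundnessTrue}.
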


The clause learning process considered here only applies one actual resolution step per conflict ($L_1\otimes_v L_2$). 
In practice, we probably want to apply multiple resolution steps before applying \textsc{Learn}. 
It is possible to use the conflicting assignment $\vec x$ to (implicitly) construct an implication graph and mimic the clause learning of SAT solvers~\cite{RabeSeshia/2016/IncrementalDeterminization,MarquesSilvaSakallah/1997/GRASPaNewSearchAlgorithmForSatisfiability}.

\subsection{Example}
\label{ssec:example}

We now discuss the application of the inference rules along the following formula:
\[
\begin{array}{rl}
    \forall x_1,x_2.~\exists y_1,\dots,y_4. 
    & (x_1\lor \neg y_1) ~\wedge~ (x_2\lor\neg y_1) \!~\wedge~ (\neg x_1\lor \neg x_2\lor y_1)\wedge \quad\hfill (1)\\
    & (\neg x_2 \lor y_2) ~\wedge~ (\neg y_1\lor y_2) ~\wedge~ (x_2\lor y_1 \lor \neg y_2)\wedge\hfill (2)\\
    & (y_1 \lor \neg y_3) ~\wedge~ (y_2\lor \neg y_3) ~\wedge\hfill (3)\\
    & (\neg y_1 \vee y_4) ~\wedge~ (\neg y_3\vee \neg y_4) \hfill (4)
\end{array}
\]

Initially, the state of the algorithm is the tuple $(\Ready, \varphi, X, \True, \True)$. 
The rule \textsc{Propagate} can be applied to $y_1$ in the initial state, as we are in the $\Ready$ state, $y_1\notin X$, and because $y_1$ satisfies the checks $\deterministic$ and $\unconflicted$: The antecedents of $y_1$ are $\mathcal A_{y_1}=x_1\wedge x_2$ and $\mathcal A_{\neg y_1}=\neg x_1\vee \neg x_2$ (see clauses in line $(1)$). 
It is easy to check that both $\mathcal A_{y_1} \vee \mathcal A_{\neg y_1}$ nor $\neg(\mathcal A_{y_1}\wedge\mathcal A_{\neg y_1})$ hold for all assignments to $x_1$ and $x_2$. 
The state resulting from the application of \textsc{Propagate} is $(\Ready, \varphi, X\cup\{y_1\}, \True, \True)$.
(Alternatively, we could apply \textsc{Decide} in the initial state, but deriving unique Skolem functions is generally preferable.)

While \textsc{Propagate} was not applicable to $y_2$ before, it now is, as the increased set $D$ made $y_2$ $\deterministic$ (see clauses in line $(2)$).
We can thus derive the state $(\Ready, \varphi, X\cup\{y_1,y_2\}, \True, \True)$. 

Now, we ran out of variables to propagate and the only applicable rule is \textsc{Decide}. 
We arbitrarily choose $y_3$ as our decision variable and arbitrarily introduce a single clause $\delta = \{(\neg y_1\vee \neg y_2\vee y_3)\}$, arriving in the state $(\Ready, \varphi.\delta, X\cup\{y_1,y_2\}, \True, \True)$.
In this step we can immediately apply \textsc{Propagate} (consider $\delta$ and the clauses in line $(3)$) to add the decision variable to the set $D$ and arrive at $(\Ready, \varphi.\delta, X\cup\{y_1,y_2,y_3\}, \True, \True)$. 

We can now apply \textsc{Backtrack} to undo the last decision, but this would not be productive.
Instead identify $y_4$ to be conflicted and we enter a conflict state with \textsc{Conflict}:~ 
$(\Conflict(\{y_4,\neg y_4\}, x_1\wedge x_2), \varphi.\delta, X\cup\{y_1,y_2,y_3\}, \True, \True)$.
To resolve the conflict we apply \textsc{Analyze} twice - once with each of the clauses in line $(4)$ - bringing us into state $(\Conflict(\{\neg y_1,\neg y_3\}, x_1\wedge x_2), \varphi.\delta, X\cup\{y_1,y_2,y_3\}, \True, \True)$. 
We can backtrack one level such that $D=X\cup\{y_1,y_2\}$ and then apply \textsc{Learn} to enter state $(\Ready, \varphi\cup\{(\neg y_1\vee \neg y_3)\}, X\cup\{y_1,y_2\}, \True, \True)$. 

The rest is simple: we apply \textsc{Propagate} on $y_3$ and take a decision for $y_4$. 
As no other variable can depend on $y_4$ we can take an arbitrary decision for $y_4$ that makes $y_4$ deterministic, as long as this does not make $y_4$ conflicted. 
Finally, we can propagate $y_4$ and then apply $\SAT$ to conclude that we have found Skolem functions for all existential variables.

\subsection{Termination}
\label{ssec:termination}
So far, we have described sound and nondeterministic algorithms that allow us to prove or disprove any 2QBF. 
We can easily turn the algorithm in the proof of Lemma~\ref{prop:completenessFalse} into a \emph{deterministic} algorithm that terminates for both true and false QBF by introducing an arbitrary ordering of variables and assignments: 
Whenever there is nondeterminism in the application of one of the rules as described in Lemma~\ref{prop:completenessFalse}, pick the smallest variable for which one of the rules is applicable. 
When multiple rules are applicable for that variable, pick them in the order they appear in the figures. 
When the inference rule allows multiple assignments, pick the smallest. 
In particular, this guarantees that the existential variables are added to $D$ in the arbitrarily picked order, as for any existential not in $D$ we can either apply \textsc{Propagate}, \textsc{Decide}, or \textsc{Conflict}.

Restricting \textsc{Decide} to decisions that are consistent with the unique consequences may be unintuitive for true QBF, where we try to find a Skolem function. 
However, whenever we make the 2QBF false by introducing clauses with \textsc{Decide}, we will eventually go to a conflict state and learn a new clause.
Deriving the learnt clause for conflicted variable $v$ from two clauses with unique consequence $v$ (as described for Lemma~\ref{prop:completenessFalse}) means that we push the constraints towards \emph{smaller} variables in the variable ordering. 
The learnt clause will thus improve the Skolem function for a smaller variable or cause another conflict for a smaller variable. 
In the extreme case, we will eventually learn clauses that look like function table entries, as used in Lemma~\ref{prop:completenessTrue}, i.e. clauses containing exactly one existential variable.
At some point, even with our restriction for \textsc{Decide}, we cannot make a ``wrong'' decision: The cases for which a variable does not have a clause with unique consequence are either irrelevant for the satisfaction of the 2QBF or our restricted decisions happen to make the right assignment. 

In cases where no static ordering of variables is used - as it will be the case in any practical approach - the termination for true QBF is less obvious but follows the same argument: 
Given enough learnt clauses, the relationships between the variables are dense enough such that even naive decisions suffice.

\subsection{Pure literals}
\label{ssec:pure}
The original paper on ID introduces the notion of \emph{pure literals} for QBF that allows us to propagate a variable $v$ even if it is not deterministic, if for a literal $l$ of $v$, all clauses $c$ that $l$ occurs in are either satisfied or $l$ is the unique consequence of $c$. 
The formalization presented in this section allows us to conclude that pure literals are a special case of \textsc{Decide}: 
We can introduce clauses defining $v$ to be of polarity $\overline{l}$ whenever all clauses containing $l$ are satisfied by another literal. 

That is, we can precisely characterize the minimal set of cases in which $v$ has to be of polarity $l$ and the decision is guaranteed to never introduce unnecessary conflicts. 
The same definition cannot be made when $l$ occurs in clauses where it is not a unique consequence, as then the clause contains another variable that is not deterministic yet. 

\subsection{Relation of ID and CDCL}
\label{ssec:sat}

\begin{figure}[t]
\vspace{-3mm}
\begin{center}
\begin{tabular}{l|l|l}
& SAT\quad$\exists Y.~~ \varphi$ & 2QBF\quad$\forall X.~\exists Y.~~ \varphi$\\[2pt]\hline
State & Partial assignment of values~ & Partial assignment of functions \\
Propagation & unit propagation & unique Skolem function w.r.t. $D$ \\
Decision & unit clause & clause with unique consequence \\
Conflict & unit clauses $y$ and $\neg y$ & $\exists X$ that implies $y$ and $\neg y$ \\
Learning & clause & clause
\end{tabular}
\end{center}
\vspace{-3mm}
\caption{Concepts in ID and their counterparts in CDCL}
\label{fig:comparisontoCDCL}
\end{figure}

There are some obvious similarities between ID and conflict-driven clause learning (CDCL) for SAT. 
Both algorithms modify their partial assignments by propagation, decisions, clause learning, and backtracking.
The main difference between the algorithms is that, while CDCL solvers maintain a partial assignment of Boolean values to variables, ID maintains a partial assignment of functions to variables (which is represented by the clauses $C|_D$). 
We summarized our observations in Fig.~\ref{fig:comparisontoCDCL}.

\begin{figure}[t]
\vspace{-2mm}
\begin{center}
\begin{tcolorbox}
\begin{minipage}{10.5cm}
$	
\phantom{InductiveRefinement}~~
\inferrule*[Left=InductiveRefinement]{(\Conflict(L,\vec x),C,D,\chi,\alpha) \\ \varphi(\vec x|_X, \vec y)}
          {(\Conflict(L,\vec x),C,D,\chi\wedge \neg \varphi(\vec y),\alpha)} 
$
\end{minipage}

\vspace{5mm}
\begin{minipage}{10.7cm}
$	
\phantom{Failed}~~
\inferrule*[Left=Failed]{(\Conflict(L,\vec x),C,D,\chi,\alpha) \\ \varphi(\vec x|_X) \text{ is unsatisfiable}}
          {(\unsat,C,D,\chi,\alpha)}
$
\end{minipage}
\end{tcolorbox}
\end{center}
\vspace{-6mm}
\caption{Inference rules adding inductive reasoning to ID}
\label{fig:CEGAR}
\end{figure}

\section{Inductive Reasoning}
\label{sec:CEGAR}

The CEGIS approach to solving a 2QBF $\forall X \ldot \exists Y \ldot \varphi$ is to iterate over $X$ assignments $\vec x$ and check if there is an assignment $\vec y$ such that $\varphi(\vec x, \vec y)$ is valid. 
Upon every successful iteration we exclude all assignments to $X$ for which $\vec y$ is a matching assignment. 
If the space of $X$ assignments is exhausted we conclude the formula is true, and if we find an assignment to $X$ for which there is no matching $Y$ assignment, the formula is false~\cite{JanotaSilva/2011/AbstractionBasedAlgorithmFor2QBF}. 

While this approach shows poor performance on some problems, as discussed in the introduction, it is widely popular and has been successfully applied in many cases. 
In this section we present a way how it can be integrated in ID in an elegant way. 
The simplicity of the CEGIS approach carries over to our extension of ID - we only need the two additional inference rules in Fig.~\ref{fig:CEGAR}.

We exploit the fact that ID already generates assignments $\vec x$ to $X$ in its conflict check. 
Whenever ID is in a conflict state, the rules in Fig.~\ref{fig:CEGAR} allow us to check if there is an assignment $\vec y$ to $Y$ that together with $\vec|_X$, which is the part of $\vec x$ defining variables in $X$, satisfies $\varphi$. 
If there is such an assignment $\vec y$, we can let the Skolem functions output $\vec y$ for the input $\vec x$. 
But the output $\vec y$ may work for other assignments to $X$, too. 
The set of all assignments to $X$ for which $\vec y$ works as an output, is easily characterized by $\varphi(\vec y)$. 
\footnote{We can actually exploit the Skolem functions that do not depend on decisions and exclude $C(0)(\vec y_{\overline{D(0)}})$ from $\chi$ instead, i.e. the set of assignments to $D(0)$ to which the part of $\vec y$ that is not in $D(0)$ is a solution.}
\textsc{InductiveRefinement} allows us to exclude the assignments $\varphi(0)$ from $\chi$, which represents the domain (i.e. assignments to $X$) for which we still need to find a Skolem function. 

This gives rise to a new invariant, stating that $\neg\chi$ only includes assignments to $X$ for which we know that there is an assignment to $Y$ satisfying $\varphi$. 
With this invariant it is clear that Lemma~\ref{prop:soundnessFalse} also holds for arbitrary $\chi$.

\begin{invariant}
\label{inv:psi}
	$\forall X.\exists Y.~\neg\chi \Rightarrow \varphi$
\end{invariant}


It is easy to check that \textsc{Propagate} preserves Invariant~\ref{inv:Dconsistency} also if $\chi$ and $\alpha$ are not $\True$. 
Invariant~\ref{inv:phieqc0} and Invariant~\ref{inv:Limpliedbyphi} are unaffected by the rules in this section. 
To make sure that Lemma~\ref{prop:soundnessTrue} is preserved as well, we thus only have to inspect \textsc{Failed}, which is trivially sound. 

\vspace{-2mm}
\paragraph{A portfolio approach?}
In principle, we could generate assignments $\vec x$ independently from the conflict check of ID. 
The result would be a portfolio approach that simply executes ID and CEGIS in parallel and takes the result from whichever method terminates first. 
The idea behind our extension is that conflict assignments are more selective and may thus increase the probability that we hit a refuting assignment to $X$. 
Also ID may profit from excluding groups of assignments for which frequently cause conflicts. 
We revisit this question in Section~\ref{sec:eval}.

\vspace{-2mm}
\paragraph{Example.} We extend the example from Subsection~\ref{ssec:example} from the point where we entered the conflict state $(\Conflict(\{y_4,\neg y_4\}, x_1\wedge x_2), \varphi.\delta, X\cup\{y_1,y_2,y_3\}, \True, \True)$. 
We can apply \textsc{InductiveRefinement}, checking that there is indeed a solution to $\varphi$ for the assignment $x_1, x_2$ to the universals (e.g. $y_1,y_2,\neg y_3, y_4$).
Instead of doing the standard conflict analysis as in our previous example, we can apply \textsc{Learn} to add the (useless) clause $y_4\vee\neg y_4$ to $C(0)$ without any backtracking.
That is, we effectively ignore the conflict and go to state $(\Ready, \varphi\cup\{(y_4\vee\neg y_4)\}.\delta, X\cup\{y_1,y_2,y_3\}, \neg x_1 \vee \neg x_2, \True)$. 

There is no assignment to $X$ that provokes a conflict for $y_4$, other than the one we excluded through \textsc{InductiveRefinement}. 
We can thus take an arbitrary decision for $y_4$ that is consistent with the unique consequences (see Subsection~\ref{ssec:unsat}), \textsc{Propagate} $y_4$, and then conclude the formula to be true.

%
%

\section{Expansion}
\label{sec:cases}

Universal expansion (defined in Section~\ref{sec:prelim}) is another fundamental proof rule that deals with universal variables. 
It has been used in early QBF solvers~\cite{Biere/2004/ResolveAndExpand} and has later been integrated in CEGAR-style QBF solvers~\cite{JanotaKMC16/RAReQS,Tentrup/2017/OnExpansionAndResolution}. 

\begin{figure}[t]
\vspace{-2mm}
\begin{center}
\begin{tcolorbox}
\begin{minipage}{10cm}
$	
\phantom{Assume}~
\inferrule*[Left=Assume]{(\Ready,C,D,\chi,\alpha) \\ \var(l)\in D(0)}
          {(\Ready,C,D,\chi,\alpha\wedge l)}
$
\end{minipage}

\vspace{5mm}
\begin{minipage}{12.1cm}
$
\phantom{Close}~~
\inferrule*[Left=Close]{(\Ready,C,D,\chi,\alpha) \\ D=X\cup Y}
          {(\Ready,C(0),D(0),\chi \wedge \neg \alpha,\True)}
$
\end{minipage}
%
\end{tcolorbox}
\end{center}
\vspace{-5mm}
\caption{Inference rules adding case distinctions to ID}
\label{fig:casedistinctions}
\end{figure}

One way to look at the expansion of a universal variable $x$ is that it introduces a case distinction over the possible values of $x$ in the Skolem functions. 
However, instead of creating a copy of the formula explicitly, which often caused a blowup in required memory, we can reason about the two cases sequentially. 
The rules in Fig.~\ref{fig:casedistinctions} extend ID by universal expansion in this spirit.

Using \textsc{Assume} we can, at any point, assume that a variable $v$ in $D(0)$, i.e. a variable that has a unique Skolem function without any decisions, has a particular value.
This is represented by extending $\alpha$ by the corresponding literal of $v$, which restricts the domain of the Skolem function that we try to construct for subsequent $\deterministic$ and $\unconflicted$ checks.
Invariant~\ref{inv:Dconsistency} and Lemma~\ref{prop:soundnessTrue} already accommodate the case that $\alpha$ is not $\True$.

When we reach a point where $D$ contains all variables, we cannot apply \textsc{Sat}, as that requires $\alpha$ to be true. 
In this case, Invariant~\ref{inv:Dconsistency} only guarantees us that the function we constructed is correct on the domain $\chi\land\alpha$. 
We can hence restrict the domain for which we still need to find a Skolem function and strengthen $\chi$ by $\neg\alpha$. 
In particular, \textsc{Close} maintains Invariant~\ref{inv:psi}.
When $\chi$ ends up being equivalent to $\False$, Invariant~\ref{inv:psi} guarantees that the original formula is true. 
(In this case we can reach the $\SAT$ state easily, as we know that from now on every application of \textsc{Propagate} must succeed.
\footnote{Technically, we could replace \textsc{Sat} by a rule that allows us to enter the $\SAT$ state whenever $\chi$ is $\False$, which arguably would be more elegant. But that would require us to introduce the \textsc{Close} rule already for the basic ID inference system.})

Note that \textsc{Assume} does not restrict us to assumptions on single variables. 
Together with \textsc{Decide} and \textsc{Propagate} it is possible to introduce variables with arbitrary definitions, add them to $D(0)$, and then assume an outcome with the rule \textsc{Assume}.

\vspace{-2mm}
\paragraph{Example.} Again, we consider the formula from Subsection~\ref{ssec:example}.
Instead of the reasoning steps described in Subsection~\ref{ssec:example}, we start using \textsc{Assume} with literal $x_2$. 
Whenever checking $\deterministic$ or $\unconflicted$ in the following, we will thus restrict ourselves to universal assignments that set $x_2$ to true. 
It is easy to check that this allows us to propagate not only $y_1$ and $y_2$, but also $y_3$. 
A decision (e.g. $\delta'=\{(y_4)\}$) for $y_4$ allows us to also propagate $y_4$ (this time without potential for conflicts), arriving in state $(\Ready, \varphi.\delta', X\cup\{y_1,y_2,y_3,y_4\}, \True, x_2)$.

We can \textsc{Close} this case concluding that under the assumption $x_2$ we have found a Skolem function. 
We enter the state 
$(\Ready, \varphi, X, \neg x_2, \True)$ which indicates that in the future we only have to consider  universal assignments with $\neg x_2$. 
Also for the case $\neg x_2$, we cannot encounter conflicts for this formula. 
Expansion hence allows us to prove this formula without any conflicts.

\section{Experimental Evaluation}
\label{sec:eval}

We extended the QBF solver CADET~\cite{RabeSeshia/2016/IncrementalDeterminization} by the extensions described in Section~\ref{sec:CEGAR} and Section~\ref{sec:cases}. 
We use the CADET-IR and CADET-E to denote the extensions of CADET by inductive reasoning (Section~\ref{sec:CEGAR}) and universal expansion (Section~\ref{sec:cases}), respectively. 
We also combined both extensions and refer to this version as CADET-IR-E.
The experiments in this section evaluate these extensions against the basic version of CADET and against other successful QBF solvers of the recent years, in particular GhostQ~\cite{KlieberSGC/2010/GhostQ}, RAReQS~\cite{JanotaKMC16/RAReQS}, Qesto~\cite{JanotaMarquesSilva/2015/SolvingQBFByClauseSelection}, DepQBF~\cite{LonsingBiere/2010/DepQBF} in version 6, and CAQE~\cite{RabeTentrup/2015/CAQEACertifyingQBFSolver,Tentrup/2017/OnExpansionAndResolution}. 
For every solver except CADET and GhostQ, we use Bloqqer~\cite{BiereLS/2011/QBCE_Bloqqer} in version 031 as preprocessor. 
For our experiments, we used a machine with a $3.6\,\text{GHz}$ quad-core Intel Xeon processor and $32\,\text{GB}$ of memory.
The timeout and memout were set to $600$ seconds and $8\,\text{GB}$. 
We evaluated the solvers on the benchmark sets of the last competitive evaluation of QBF solvers, QBFEval-2017~\cite{conf/sat/Pulina16}.

\begin{figure}[t]
  \centering
  \begin{tikzpicture}
    \begin{semilogyaxis}[xlabel=\# solved instances,ylabel=time (sec.),width=.95\columnwidth,height=6cm,ymin=0.02,ymax=900,xmin=0,xmax=250,legend entries={GhostQ,CADET-IR-E,CADET-IR,RAReQS,CAQE,Qesto,CADET-E,DepQBF,CADET}, 
        transpose legend,
        legend columns=2,
        legend style={at={(0.45,-0.25)},anchor=north},
        ]
      \addplot+[dashed,very thick,mark repeat=10,mark=none,cyan] table {plots/qbfeval-2017-2qbf_cactus_ghostq-g0.dat};
      \addplot+[solid,very thick,mark=none,mark repeat=10,violet] table {plots/qbfeval-2017-2qbf_cactus_cadet-cegar-case-splits-g0.dat};
      \addplot+[solid,very thick,mark=none,mark repeat=10,blue] table {plots/qbfeval-2017-2qbf_cactus_cadet-cegar-g0.dat};
      \addplot+[dashed,thick,mark=none,mark repeat=10,black] table {plots/qbfeval-2017-2qbf_cactus_rareqs-g0.dat};
      \addplot+[dashed,thick,mark=none,mark repeat=10,orange] table {plots/qbfeval-2017-2qbf_cactus_caqe-g0.dat};
      \addplot+[dashed,thick,mark=none,mark repeat=10,brown] table {plots/qbfeval-2017-2qbf_cactus_qesto-g0.dat};
      \addplot+[solid,very thick,mark repeat=10,mark=none,red] table {plots/qbfeval-2017-2qbf_cactus_cadet-case-splits-g0.dat};
      \addplot+[dashed,very thick,mark repeat=10,mark=none,magenta] table {plots/qbfeval-2017-2qbf_cactus_depqbf-g0.dat};
      \addplot+[solid,very thick,mark repeat=10,mark=none,green] table {plots/qbfeval-2017-2qbf_cactus_cadet-g0.dat};
    \end{semilogyaxis}
  \end{tikzpicture}
  \label{fig:cactus_qbfeval2017}
  \vspace{-2mm}
  \caption{Cactus plot comparing solvers on the QBFEval-2017 2QBF benchmark.}
  \vspace{-1mm}
\end{figure}
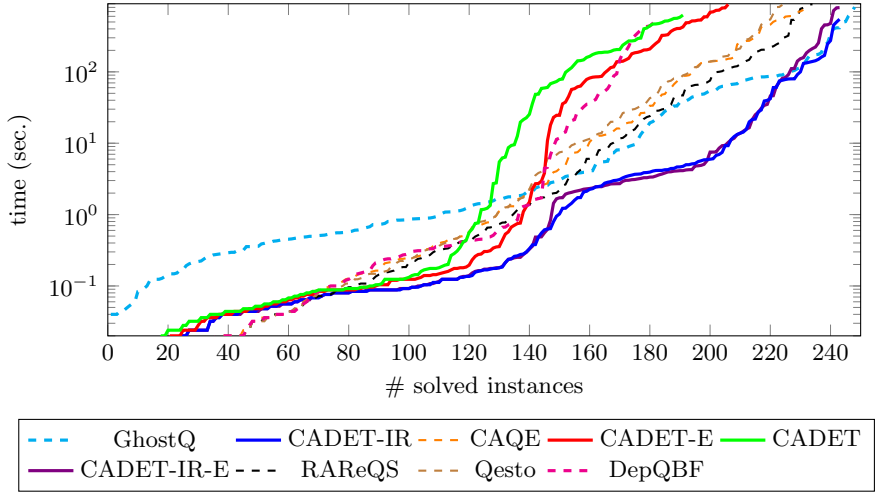

\vspace{-2mm}
\paragraph{How does inductive reasoning affect the performance? } 
In Fig.~\ref{fig:cactus_qbfeval2017} we see that CADET-IR clearly dominates plain CADET. 
It also dominates all solvers that relied on clause-level CEGAR and Bloqqer (CAQE, Qesto, RAReQS). 

Only GhostQ beats CADET-IR and solves 5 more formulas (of 384). 
A closer look revealed that there are many formulas for which CADET-IR and GhostQ show widely different runtimes hinting at potential for future improvement. 

GhostQ is based on the CEGAR principle, but reconstructs a circuit representation from the clauses instead of operating on the clauses directly~\cite{KlieberSGC/2010/GhostQ}. 
This makes GhostQ a representative of QBF solvers working with so called ``structured'' formulas (i.e. not CNF). 
CADET, on the other hand, refrains from identifying logic gates in CNF formulas and directly operates with the ``unstructured'' CNF representation. 
In the ongoing debate in the QBF community on the best representation of formulas for solving quantified formulas, our experimental findings can thus be interpreted as a tie between the two philosophies.

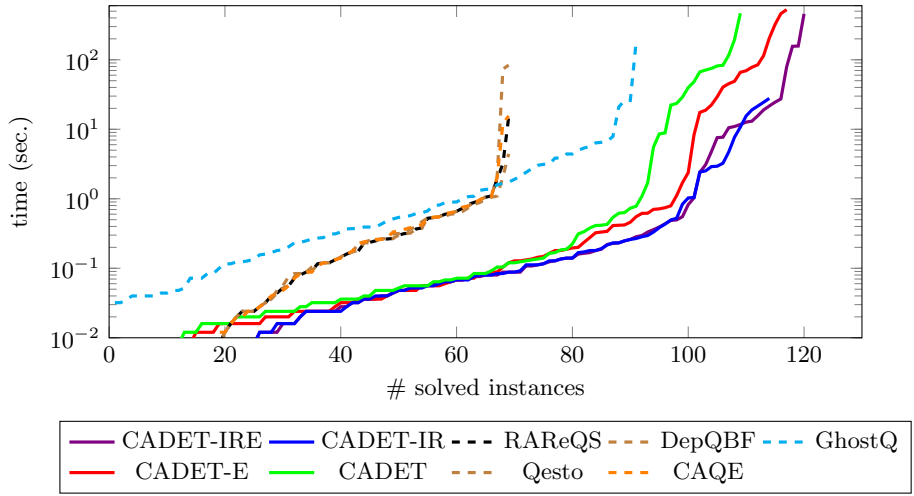
\begin{figure}[t]
  \centering
  \begin{tikzpicture}
    \begin{semilogyaxis}[xlabel=\# solved instances,ylabel=time (sec.),width=.95\columnwidth,height=6cm,ymin=0.01,ymax=600,xmin=0,xmax=130,legend entries={CADET-IRE,CADET-E,CADET-IR,CADET,RAReQS,Qesto,DepQBF,CAQE,GhostQ}, 
        transpose legend,
        legend columns=2,
        legend style={at={(0.5,-0.25)},anchor=north},
        ]
%

    \addplot+[solid,very thick,mark=none,violet] table {plots/hardwarefixpoint_cactus_cadet-cegar-case-splits-g0.dat};
      \addplot+[solid,very thick,mark=none,red] table {plots/hardwarefixpoint_cactus_cadet-case-splits-g0.dat};
      \addplot+[solid,very thick,mark=none,blue] table {plots/hardwarefixpoint_cactus_cadet-cegar-g0.dat};
      \addplot+[solid,very thick,mark=none,green] table {plots/hardwarefixpoint_cactus_cadet-g0.dat};
      \addplot+[dashed,very thick,mark=none,black] table {plots/hardwarefixpoint_cactus_rareqs-g0.dat};
      \addplot+[dashed,very thick,mark=none,brown] table {plots/hardwarefixpoint_cactus_qesto-g0.dat};
      \addplot+[dashed,very thick,mark=none,brown] table {plots/hardwarefixpoint_cactus_depqbf-5-g0.dat};
      \addplot+[dashed,very thick,mark=none,orange] table {plots/hardwarefixpoint_cactus_caqe-g0.dat};
    \addplot+[dashed,very thick,mark=none,cyan] table {plots/hardwarefixpoint_cactus_ghostq-g0.dat};
    \end{semilogyaxis}
  \end{tikzpicture}
  \vspace{-3mm}
  \caption{Cactus plot comparing solver performance on the Hardware Fixpoint formulas. Some but not all of these formulas are part of QBFEval-2017. The formulas encode diameter problems that are known to be hard for classical QBF search algorithms~\cite{DaijueYinleiDarshSharad/2005/DiameterProblems}.}
  \label{fig:cactus_cd}
\end{figure}

\vspace{-1mm}
\paragraph{Is the inductive reasoning extension just a portfolio-approach?}
To settle this question, we created a version of CADET-IR, called IR-only, that exclusively applies inductive reasoning by generating assignments to the universals and applying \textsc{InductiveReasoning}. 
This version of CADET does not learn any clauses, but otherwise uses the same code as CADET-IR. 
On the QBFEval-2017 benchmark, IR-only and CADET together solved 235 problems within the time limit, while CADET-IR solved 243 problems. 
That is, even though the combined runtime of CADET and IR-only was twice the runtime of CADET-IR, they solved fewer problems. 
CADET-IR also uniquely solved 22 problems. 
This indicates that CADET-IR improves over the portfolio approach. 

%
%
%
%

\vspace{-1mm}
\paragraph{How does universal expansion affect the performance? } 
CADET-E clearly dominates plain CADET on QBFEval-2017, but compared to CADET-IR and some of the other QBF solvers, CADET-E shows mediocre performance overall. 
However, for some subsets of formulas, such as the Hardware Fixpoint formulas shown in Fig.~\ref{fig:cactus_cd}, CADET-E dominated CADET, CADET-IR, and all other solvers. 
We also combined the two extensions of CADET to obtain \mbox{CADET-IR-E}. 
While this helped to improve the performance on the Hardware Fixpoint formulas even further, it did not change the overall picture on QBFEval-2017. 


\section{Conclusion}

Reasoning in quantified logics is one of the major challenges in com\-puter-aided verification. 
Incremental Determinization (ID) introduced a new algorithmic principle for reasoning in 2QBF and delivered first promising results~\cite{RabeSeshia/2016/IncrementalDeterminization}. 
In this work, we formalized and generalized ID to improve the understanding of the algorithm and to enable future research on the topic. 
The presentation of the algorithm as a set of inference rules has allowed us to disentangle the design choices from the principles of the algorithm (Section~\ref{sec:inference-rules}). 
Additionally, we have explored two extensions of ID that both significantly improve the performance:
The first one integrates the popular CEGAR-style algorithms and Incremental Determinization (Section~\ref{sec:CEGAR}). 
The second extension integrates a different type of reasoning termed universal expansion (Section~\ref{sec:cases}). 

\paragraph{Acknowledgements}
We want to thank Martina Seidl, who brought up the idea to formalize ID as inference rules, and Vijay D'Silva, who helped with disentangling the different perspectives on the algorithm. 
This work was supported in part by NSF grants 1139138, 1528108, 1739816, SRC contract 2638.001, the Intel ADEPT center, and the European Research Council (ERC) Grant OSARES (No. 683300). 
\phantom{\cite{SiekmannWrightson/1983/AutomationOfReasoning2ClassicalPapersOnComputationalLogic}}




\bibliography{main}

\end{document}